\documentclass[journal,twocolumn]{IEEEtran}

\ifCLASSINFOpdf
  
\else
  
\fi

\usepackage{amsmath}
\usepackage{amsthm}
\usepackage{cases}
\usepackage{amsfonts}
\usepackage{cite}
\usepackage{amssymb}
\usepackage{enumerate}
\usepackage{graphicx}
\usepackage{float}
\usepackage{caption}
\usepackage{verbatim}
\usepackage{dblfloatfix}
\usepackage[cmintegrals]{newtxmath}
\usepackage[ruled,linesnumbered]{algorithm2e}
\usepackage{color}
\usepackage{soul}

\newtheorem{theorem}{Theorem}

\newtheorem{definition}{Definition}
\newtheorem{corollary}{Corollary}
\newtheorem{remark}{Remark}
\newtheorem{proposition}{Proposition}

% correct bad hyphenation here
\hyphenation{op-tical net-works semi-conduc-tor}

\begin{document}

\title{Lossy Source Coding with Broadcast Side Information}

\author{%
\IEEEauthorblockN{Yiqi Chen{$^*$},
  Holger Boche{$^*$}, Marc Geitz{$^\dagger$}}\\
\IEEEauthorblockA{{$*$} Technical University of Munich,  
80333 Munich, Germany,\;\;\{yiqi.chen, boche\}@tum.de\\
{$\dagger$} T-Labs, Deutsche Telekom AG, Germany, marc.geitz@telekom.de }
}

\maketitle
\thispagestyle{empty}
\pagestyle{empty}

\begin{abstract}
This paper considers the source coding problem with broadcast side information. The side information is sent to two receivers through a noisy broadcast channel. We provide an outer bound of the rate--distortion--bandwidth (RDB) quadruples and achievable RDB quadruples when the helper uses a separation-based scheme. Some special cases with full characterization are also provided. We then compare the separation-based scheme with the uncoded scheme in the quadratic Gaussian case.
\end{abstract}

\ifCLASSOPTIONcaptionsoff
  \newpage
\fi

\section{Introduction}

The performance limitation of transmitting a source through a point-to-point noisy channel has been fully characterized by Shannon's source-channel separation theorem\cite{shannon1948mathematical}. However, such a separation theorem generally does not hold for multi-user joint source-channel coding (JSCC) problems.

A classic yet open problem in multi-user JSCC is broadcasting a single source to multiple receivers.
Recent works on this topic mainly focus on designing digital/analog/hybrid schemes \cite{prabhakaran2011hybrid,nayak2010wyner,gao2011wyner,minero2015unified,tian2013optimality} and characterizing outer bounds on the admissible distortion regions \cite{tian2010approximate,khezeli2015outer,yu2018distortion,reznic2006distortion,khezeli2016source,gohari2008outer}.
Specifically, for transmitting a Gaussian source through a Gaussian broadcast channel, the uncoded scheme is shown to be optimal only when the source bandwidth matches the channel bandwidth, while the optimal strategy for bandwidth mismatch cases remains open. 

This paper considers the multi-user lossy compression of a source $S$ with two receivers and a helper. However, the helper who observes a correlated source $T$ can only communicate to the receivers through a noisy broadcast channel. 
 A similar problem studied in \cite{shamai2002systematic} considers the case where two senders observe the same source $S$ and communicate to the receiver through two separate noisy channels. Sufficient and necessary conditions for the separation-based scheme to be optimal were derived. We extend the model by considering a pair of correlated sources $(S,T)$ and two receivers, and simplify the link from the sender to the receiver to a noiseless link.

When the noisy channel is a degraded broadcast channel, our model also shares some similarities with source coding with degraded side information (SI) \cite{heegard1983capacity,tian2007multistage}. For some special cases, our model in fact reduces to source coding with degraded SI when an uncoded scheme is used. We compare these two problems with a quadratic Gaussian example.\footnote{Throughout this paper, random variables, sample values and their alphabets are denoted by capital, lower and calligraphic letters, e.g., $X,x$ and $\mathcal{X}$. A sequence with length $n$ is denoted by $X^n=(X_1,X_2,...,X_n)$. $\mathbb{E}[\cdot]$ is the expectation.}

\section{definitions}

The sender observes the source $S^n$ and sends a lossy description to both receivers at rate $R$. The helper observes a correlated source $T^n$. It encodes the source into some codeword $X^l$ and inputs it to a degraded broadcast channel $P_{YZ|X}=P_{Y|X}P_{Z|Y}$. The receivers observe channel output $Z^l$ and $Y^l$ and reconstruct $\hat{S}^n_1$ and $\hat{S}^n_2$ with distortion levels $D_1$ and $D_2$, respectively, such that
\begin{align}
    \mathbb{E}[d(S^n,\hat{S}^n_1)] \leq D_1,\;\mathbb{E}[d(S^n,\hat{S}^n_2)] \leq D_2
\end{align}
for some distortion function $d: \mathcal{S}\times \hat{\mathcal{S}} \to [0,+\infty)$. The bandwidth expansion factor is defined by $\rho = l/n$. Furthermore, due to the characteristics of the broadcast channel according to our assumption, we call the receiver who observes $Z^l$ the weak receiver and the other one the strong receiver.

In some of the results, we restrict the helper to use a separation-based coding scheme. That says, the helper first quantizes the source $T^n$ using a lossy source code, and then encodes it using a channel code. The formal definitions of the general joint-source channel coding and separation-based coding schemes are as follows.
\begin{figure}[t]
    \centering
    \includegraphics[scale=0.7]{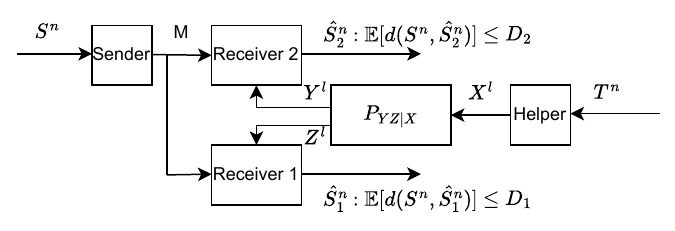}
    \caption{\footnotesize Lossy compression with broadcasted side information}
    \label{fig:enter-label}
\end{figure}
\begin{definition}\label{def: joint source channel code}
    A joint source-channel code with a helper $(\rho,R,f,g)$ includes
    \begin{itemize}
        \item A message set of the source encoder $\mathcal{M}=[1:2^{nR}]$.
        \item A source encoder $f_1:\mathcal{S}^n\to \mathcal{M}.$
        \item A helper encoder $f_2: \mathcal{T}^n \to \mathcal{X}^l$,
        \item A pair of decoders $g_1:\mathcal{M}\times\mathcal{Z}^l \to \hat{S}^n_1,g_2: \mathcal{M}\times\mathcal{Y}^l \to \hat{S}^n_2$.
    \end{itemize}
\end{definition}
As a special case of the joint source-channel code, we define the following separation-based coding scheme:
\begin{definition}\label{def: separation-based scheme for the helper}
    The separation-based scheme for the helper includes
    \begin{itemize}
        \item A concatenated encoder $f_{21}\circ f_{22}$, where
        \begin{align}
            &f_{21}:\mathcal{T}^n \to \mathcal{M}_1 \times \mathcal{M}_2,\;\;f_{22}:\mathcal{M}_1\times\mathcal{M}_2 \to \mathcal{X}^l.
        \end{align}
        \item A pair of concatenated decoders $g_{i1}\circ g_{i2},i=1,2$, where
        \begin{align}
            &g_{11}:\mathcal{Z}^l\to\mathcal{M}_1, \;\;g_{12}:\mathcal{M}\times\mathcal{M}_1  \to \hat{\mathcal{S}}^n_1,\\
            &g_{21}:\mathcal{Y}^l\to \mathcal{M}_2,\;\;g_{22}:\mathcal{M}\times \mathcal{M}_2 \to \hat{\mathcal{S}}^n_2.
        \end{align}
    \end{itemize}
\end{definition}

\begin{definition}
    A source code $(\rho,R,f_1,f_{21}\circ f_{22},g_{11}\circ g_{12},g_{21}\circ g_{22})$ for $S^n$ with a correlated source $T^n$ observed by a separation-based helper includes
    \begin{itemize}
        \item A message set of the source encoder $\mathcal{M}=[1:2^{nR}]$.
        \item A source encoder $f_1:\mathcal{S}^n\to \mathcal{M}.$
        \item A concatenated helper encoder $f_{21}\circ f_{22}$ and a pair of concatenated decoders $g_{i1}\circ g_{i2},i=1,2$.
    \end{itemize}
    For simplicity, we use $(\rho,R,f,g)_S$ to denote the code.
\end{definition}
Let $(\hat{M}_1,\hat{M}_2)$ be the decoding results of $(M_1,M_2).$
\begin{definition}
    A quadruple $(R,D_1,D_2,\rho)$ is achievable if for any $\epsilon >0$ there exists a joint source-channel code $(\rho,R,f,g)$ such that
    \begin{align}
    &l/n \leq \rho  + \epsilon,\\
        &\mathbb{E}[d(S^n,\hat{S}^n_1(M,Z^l))]\leq D_1 + \epsilon,\\
        &\mathbb{E}[d(S^n,\hat{S}^n_2(M,Y^l))]\leq D_2 + \epsilon.
    \end{align}
    The rate-distortion-bandwidth (RDB) region $\mathcal{R}$ is the set of all $(R,D_1,D_2,\rho)$ quadruples that are achievable.
    
    When the helper is restricted to the separation-based scheme, a quadruple $(R,D_1,D_2,\rho)$ is achievable if for any $\epsilon>0$ there exists a code $(\rho,R,f,g)_S$ such that
    \begin{align}
    &l/n \leq \rho  + \epsilon,\\
    &Pr\{(M_1,M_2)\neq (\hat{M}_1,\hat{M}_2)\}\leq\epsilon,\\
    &\mathbb{E}[d(S^n,\hat{S}^n_1(M,\hat{M}_1))]\leq D_1 + \epsilon,\\
        &\mathbb{E}[d(S^n,\hat{S}^n_2(M,\hat{M}_2))]\leq D_2 + \epsilon.
    \end{align}
    The RDB region $\mathcal{R}_{Se}$ is the set of all $(R,D_1,D_2,\rho)$ quadruples that are achievable with a separation-based scheme.
\end{definition}

\section{main results}
The main result characterizes the tradeoff between the sender’s compression rate, the helper’s broadcast bandwidth, and the reconstruction distortions at two receivers when side information is broadcast over a degraded channel. It shows that a layered (separation-based) scheme is optimal  when the helper observes the source directly, yielding simple additive information constraints that fully describe the rate–distortion–bandwidth region.
%This section provides an outer bound of the rate-distortion-bandwidth function as well as some inner bounds with additional assumptions.
\begin{theorem}\label{coro: general outer bound degraded case}
    Given a joint distribution $P_{ST}$, for the lossy compression of the source $S$ with a helper who observes $T$ and communicates to the receiver through a degraded BC $P_{YZ|X},$ a quadruple $(R,D_1,D_2,\rho)$ is achievable only if for every distribution $P_{U|T}$, there exists a distribution $P_{\hat{S}_1|SV_1}P_{\hat{S}_2|SV_2}$ and Markov chains $(S,U)-T-V_2-V_1$ and $W-X-(Y,Z)$ such that $R\geq 0 $ and
    \begin{align}
        &R \geq I(\hat{S}_1;U) + I(\hat{S}_1,\hat{S}_2;S|U) - I(U;V_1) - I(V_2;S|U),\\
        &R \geq I(\hat{S}_1;U) - I(U;V_1),\\
        &\rho I(W;Z) \geq I(V_1;U),\\
        &\rho I(W;Z) + \rho I(X;Y|W) \geq I(V_1;U) + I(V_2;T|U),
        \end{align}
        \begin{align}
        &\mathbb{E}[d(S,\hat{S}_1)] \leq D_1, \;\mathbb{E}[d(S,\hat{S}_2)] \leq D_2.
    \end{align}
\end{theorem}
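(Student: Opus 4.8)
The plan is a single-letterization (weak) converse in which the source-coding constraints and the broadcast-channel constraints are derived on their own and then stitched together through the helper's description, which serves simultaneously as a codeword for the degraded BC and as coded side information for the two reconstructions. Fix a joint source--channel code $(\rho,R,f,g)$ meeting the distortion constraints with source block length $n$ and channel block length $l=\lceil\rho n\rceil$, fix the test channel $P_{U|T}$, and let $U^n$ be produced from $T^n$ memorylessly through $P_{U|T}$; the universal quantifier ``for every $P_{U|T}$'' simply records that the argument runs for each such auxiliary sequence. Two facts do the structural work: (i) since $(S^n,T^n)$ is i.i.d.\ and the channel is memoryless, for every source index $i$ one has $(S_i,U_i)\perp (X^l,Y^l,Z^l)\,|\,T_i$, while $Z^l-Y^l-X^l$ by degradedness; (ii) the distortion of the code depends only on the pair marginals $(S_i,\hat S_{1,i})$ and $(S_i,\hat S_{2,i})$, so one may replace the decoders' literal maps by the conditional laws they induce after marginalizing onto $(S_i,V_{1,i})$ and $(S_i,V_{2,i})$ — which is how the factorization $P_{\hat S_1|SV_1}P_{\hat S_2|SV_2}$ in the statement arises.

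For the channel side, set at channel index $k$ the Bergmans-type auxiliary $W_k=(U^n,Z^l_{k+1})$, for which $W_k-X_k-(Y_k,Z_k)$ holds by memorylessness, and run the classical degraded-BC converse to obtain, after a uniform channel time-sharing variable is introduced and absorbed, a single-letter triple $W-X-(Y,Z)$ with $l\,I(W;Z)\ge I(U^n;Z^l)$ and $l\,I(X;Y|W)\ge I(T^n;Y^l|U^n)$ (hence also the sum). For the source side, define $V_{1,i}$ and $V_{2,i}$ as the weak and strong channel outputs augmented with the past auxiliary symbols needed to telescope the chain rule — e.g.\ $V_{1,i}=(Z^l,U^{i-1})$ and $V_{2,i}=(Y^l,U^{i-1})$ — so that $(S,U)-T-V_2-V_1$ holds by (i) and degradedness and, crucially, $\sum_i I(V_{1,i};U_i)=I(U^n;Z^l)$ while $\sum_i I(V_{2,i};T_i|U_i)$ collapses to $I(T^n;Y^l|U^n)$ up to standard identities; this interface identity is exactly what makes the two channel bounds of the previous step read, after dividing by $n$, as $\rho I(W;Z)\ge I(V_1;U)$ and $\rho I(W;Z)+\rho I(X;Y|W)\ge I(V_1;U)+I(V_2;T|U)$. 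The two bounds on $R$ then come from $nR\ge H(M)\ge H(M|Z^l)$, expanded against $(S^n,U^n)$ and the side-information variables in the manner of the Wyner--Ziv converse for the single-decoder term $I(\hat S_1;U)-I(U;V_1)$ and of the degraded Heegard--Berger converse for the joint-decoder term $I(\hat S_1,\hat S_2;S|U)-I(V_2;S|U)$, followed by a uniform source time-sharing variable $Q$ absorbed into $U,V_1,V_2,\hat S_1,\hat S_2$.

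The main obstacle is the bookkeeping at the interface between the two block lengths: the helper's description must be analyzed as a degraded-BC codeword in the $l$-indexed world (where the Bergmans auxiliary $W$ and the factor $\rho=l/n$ live) and simultaneously as coded side information in the $n$-indexed world (where $U,V_1,V_2$ and the source rate live), and $V_1,V_2$ must be chosen so that the \emph{same} single-letter mutual-information quantities control the source-rate bounds, the two channel-use bounds, and the distortion constraints at once, while every Markov chain survives both time-sharings. Two further points need care: verifying that the negative terms $-I(U;V_1)$ and $-I(V_2;S|U)$ emerge with the correct conditioning — so that discarding the $T^n$-dependent refinements to which the source encoder has no access is done consistently and the result is a genuine outer bound — and extracting the coupled term $I(\hat S_1,\hat S_2;S|U)$, which ties the two decoders together only through the common rate-$R$ link, without positing any dependence between $\hat S_1$ and $\hat S_2$ beyond the stated product form.
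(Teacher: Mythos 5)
Your plan follows essentially the same route as the paper's proof: generate $U^n$ i.i.d.\ from $T^n$ via $P_{U|T}$, set $V_{1,i}=(Z^l,U^{i-1})$, obtain the helper constraints from the degraded-BC converse with a Bergmans-type auxiliary $W$ applied to $I(U^n;Z^l)$ and $I(T^n;Y^l|U^n)$, and single-letterize $nR\ge H(M)\ge I(M;U^n,S^n|Z^l)$ with time-sharing. The one concrete correction is that your illustrative choice $V_{2,i}=(Y^l,U^{i-1})$ is too small for the negative term $-I(Z^l,Y^l;S^n|U^n)$ to be absorbed into $-I(V_2;S|U)$ after the chain-rule expansion over $i$ (which produces conditioning on $S^{i-1}$ and $U^{n\setminus i}$); the paper takes $V_{2,i}=(Y^l,S^{i-1},T^{i-1},U^{n\setminus i})$, which is exactly the enlargement your ``main obstacle'' paragraph anticipates and which still satisfies $(S,U)-T-V_2-V_1$ and the identity $\sum_i I(T_i;V_{2,i}|U_i)=I(T^n;Y^l|U^n)$.
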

\begin{remark}
    The outer bound uses a remote source argument that was also used in \cite{tian2010approximate,khezeli2015outer,yu2018distortion,reznic2006distortion}. The bound is motivated by the rate loss in the source coding with a helper problem.  A simple example is the lossless source coding with a helper. The optimal sender-helper rate region is the set of $(R,R_h)$ s.t.
\begin{align}
    &R \geq H(S|V)=H(S)-I(S;V),\;R_h \geq I(T;V).
\end{align}
Only $I(S;V)$ bits out of $I(T;V)$ bits of information sent by the helper are useful for the sender. We provide an outer bound for the single receiver case in Appendix \ref{app: single user outer bound} for a better understanding.
\end{remark}
\begin{definition}\label{def: definition of the one helper inner boud region}
    Let $\mathcal{R}^{in}_{Se}$ be the set of quadruple $(R,D_1,D_2,\rho)$ such that there exists a set of finite auxiliary random variables $(U_1,U_2,V_1,V_2,W)$ such that
    \begin{align}
        \label{ine: one helper rate constraint 1}&R > I(U_1;S|V_1) + I(U_2;S|U_1,V_2),\\
        \label{ine: one helper rate constraint 4}&I(V_1;T) \leq \rho I(W;Z),\\
        \label{ine: one helper rate constraint 5}&I(V_1;T) + I(V_2;T|U_1,V_1)\leq \rho I(W;Z)+ \rho I(X;Y|W).
    \end{align}
with the joint distribution $P_{ST}P_{U_1U_2|S}P_{V_2|T}P_{V_1|V_2}P_WP_{X|W}P_{YZ|X}.$
Furthermore, there exist deterministic functions $h_1:\mathcal{U}_1\times\mathcal{V}_1 \to \hat{\mathcal{S}_1}, h_2:\mathcal{U}_1\times\mathcal{U}_2\times\mathcal{V}_2 \to \hat{\mathcal{S}_2}$ such that
\begin{align*}
    &\mathbb{E}[d(S,h_1(U_1,V_1))] \leq D_1,\;\mathbb{E}[d(S,h_2(U_1,U_2,V_2))] \leq D_2.
\end{align*}
\end{definition}

\begin{theorem}\label{the: general one helper inner bound}
    The RDB region of the lossy source coding with broadcasted side information under the separation assumption satisfies $\mathcal{R}^{in}_{Se} \subseteq \mathcal{R}_{Se}.$
    Furthermore, the cardinalities of the auxiliary random variable alphabets satisfy
    \begin{align}
        &|\mathcal{U}_1| \leq |\mathcal{S}|+4, \;|\mathcal{U}_2| \leq |\mathcal{S}|(|\mathcal{S}|+4)+1\\
        &|\mathcal{V}_1| \leq |\mathcal{T}|+3,\;|\mathcal{V}_2| \leq |\mathcal{T}|(|\mathcal{T}|+3)+1,\\
        &|\mathcal{W}| \leq |\mathcal{X}|+1.
    \end{align}
\end{theorem}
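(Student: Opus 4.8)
The plan is to prove $\mathcal{R}^{in}\subseteq\mathcal{R}$ by a joint source--channel random-coding argument that superposes a Heegard--Berger-style lossy code for $S$ --- whose two decoders receive their side information from the helper's descriptions --- on top of a two-layer successive-refinement/Wyner--Ziv description of $T$, the two $T$-layers being carried over the degraded broadcast channel by a degraded-message-set superposition channel code. Source objects have blocklength $n$, channel objects blocklength $l=\rho n$, and a slack $\delta>0$ is kept on every rate inequality and sent to $0$ at the end.

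Fix $(U_1,U_2,V_1,V_2,W)$ and $h_1,h_2$ as in Definition~\ref{def: definition of the one helper inner boud region}. Codebooks: (i) for $T$, about $2^{nI(V_1;T)}$ sequences $V_1^n(m_1)\sim\prod P_{V_1}$ and, inside each, about $2^{nI(V_2;T|V_1)}$ sequences $V_2^n\sim\prod P_{V_2|V_1}$ partitioned into about $2^{nI(V_2;T|U_1,V_1)}$ bins --- this bin count is the crucial choice, since $I(V_2;T|V_1)-I(V_2;T|U_1,V_1)=I(V_2;U_1|V_1)$ (using $I(V_2;U_1\mid T,V_1)=0$), so each bin holds $\approx 2^{nI(V_2;U_1|V_1)}$ entries, exactly what a Wyner--Ziv decoder with side information $(U_1^n,V_1^n)$ can resolve; (ii) for $S$, about $2^{nI(U_1;S)}$ sequences $U_1^n\sim\prod P_{U_1}$ partitioned into about $2^{nI(U_1;S|V_1)}$ bins and, inside each $U_1^n$, about $2^{nI(U_2;S|U_1)}$ sequences $U_2^n\sim\prod P_{U_2|U_1}$ partitioned into about $2^{nI(U_2;S|U_1,V_2)}$ bins; (iii) a superposition channel code with about $2^{\rho nI(W;Z)}$ clouds $W^l\sim\prod P_W$ and, per cloud, about $2^{\rho nI(X;Y|W)}$ satellites $X^l\sim\prod P_{X|W}$. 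Encoding: the helper finds $(V_1^n,V_2^n)$ jointly typical with $T^n$ (covering lemma); it must deliver $m_1$ to the weak receiver and the bin index of $V_2^n$ to the strong receiver, so it places $m_1$ plus any leftover cloud capacity into the cloud center and the remaining bin bits into the satellite. This split between leftover cloud and satellite is precisely what turns the too-strong ``cloud-only'' requirement into the pair \eqref{ine: one helper rate constraint 4}--\eqref{ine: one helper rate constraint 5}. The source encoder finds $(U_1^n,U_2^n)$ jointly typical with $S^n$ and transmits $M=$ (bin index of $U_1^n$, bin index of $U_2^n$), whence $R>I(U_1;S|V_1)+I(U_2;S|U_1,V_2)$, i.e. \eqref{ine: one helper rate constraint 1}.

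Decoding: the weak receiver decodes the cloud from $Z^l$ (cloud rate $\le\rho I(W;Z)$), reads off $m_1$ hence $V_1^n$, decodes $U_1^n$ from its bin by joint typicality with $V_1^n$, and outputs $h_1(U_{1,i},V_{1,i})$. The strong receiver decodes cloud and satellite from $Y^l$ (valid since $\rho I(W;Z)\le\rho I(W;Y)$ by degradedness and the satellite rate is $\le\rho I(X;Y|W)$), obtaining $V_1^n$ and the bin of $V_2^n$; it decodes $U_1^n$ from $V_1^n$ just as the weak receiver does (so the common rate $I(U_1;S|V_1)$, designed for the weaker side information, suffices), recovers $V_2^n$ from its bin by joint typicality with $(U_1^n,V_1^n)$, decodes $U_2^n$ from its bin by joint typicality with $(U_1^n,V_2^n)$, and outputs $h_2(U_{1,i},U_{2,i},V_{2,i})$. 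In the formalism of Definition~\ref{def: separation-based scheme for the helper} the cloud index is absorbed into $M_2$, which is harmless since the strong (less noisy) receiver decodes it anyway. The analysis is the union of: two covering failures (covering lemma); failure of $(S^n,T^n,U_1^n,U_2^n,V_1^n,V_2^n)$ to be jointly typical under $P_{ST}P_{U_1U_2|S}P_{V_2|T}P_{V_1|V_2}$; channel-decoding error (superposition packing lemma); and Wyner--Ziv errors at the two source layers (packing lemma); each probability vanishes by the rate inequalities, and on the complement both reconstructions are jointly typical with $S^n$, so the typical-average lemma (with the usual truncation of the atypical part) gives $\mathbb{E}[d(S^n,\hat{S}^n_j)]\le D_j+\epsilon$. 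Expurgating the random code and letting $\delta,\epsilon\to0$ completes the inclusion.

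The step I expect to be the real obstacle is the joint-typicality bookkeeping that glues the three codes: the $S$-codewords are chosen at the source encoder and the $T$-codewords at the helper, with no coordination, so a Markov-lemma argument --- exploiting that the two codebooks use conditionally independent randomness and that the target law factors as $U_1U_2-S-T-V_2-V_1$ --- is needed to ensure that \emph{all} the required joint-typicality events (encoder covering, the two Wyner--Ziv decodings, and the channel-code typicality) hold simultaneously after the change of measure from the codebook-induced distribution to $P_{ST}P_{U_1U_2|S}P_{V_2|T}P_{V_1|V_2}$; and the $n$-versus-$l$ mismatch must be pushed carefully through the rate split of the $V_2$-bin index so that exactly \eqref{ine: one helper rate constraint 4}--\eqref{ine: one helper rate constraint 5} emerge, with nothing stronger. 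Finally, the cardinality bounds come from the standard convex-cover (Fenchel--Eggleston--Carath\'eodory) technique applied to one auxiliary variable at a time: for $U_1$, preserve $P_S$ ($|\mathcal{S}|-1$ equalities) and the five linear functionals $I(U_1;S|V_1)$, $I(U_2;S|U_1,V_2)$, $I(V_2;T|U_1,V_1)$, $\mathbb{E}[d(S,h_1)]$, $\mathbb{E}[d(S,h_2)]$, yielding $|\mathcal{U}_1|\le|\mathcal{S}|+4$; applying the same to $U_2$ for each value of $U_1$ gives $|\mathcal{U}_2|\le|\mathcal{S}|(|\mathcal{S}|+4)+1$; and symmetrically, preserving $P_T$ bounds $|\mathcal{V}_1|,|\mathcal{V}_2|$ and preserving $P_X$ bounds $|\mathcal{W}|$.
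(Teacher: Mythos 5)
Your proof follows essentially the same architecture as the paper's (sketched) achievability: a two-layer Wyner--Ziv description of $T$ carried by superposition coding over the degraded broadcast channel, interleaved with a common/private Wyner--Ziv description of $S$, decoded in the order $V_1 \to U_1 \to V_2 \to U_2$. Your additions --- the rate splitting of the $V_2$-bin index between leftover cloud capacity and the satellite (needed to obtain the sum constraint \eqref{ine: one helper rate constraint 5} rather than a stronger per-layer constraint), the Markov-lemma justification of joint typicality across the uncoordinated encoders, and the support-lemma cardinality counts --- correctly supply details that the paper's sketch omits.
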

\emph{Sketch of achievability:} The message of the sender $S$ consists of two parts: a common part $U_1$ that can be decoded by both receivers and a private part $U_2$ that can only be decoded by the strong receiver. As we assumed that the helper uses a separation-based scheme, it first compresses $T^n$ into a two-layer description. The first layer $V_1$ is generated at rate  $I(V_1;T)$ using a lossy source coding code. Correspondingly, the common part $U_1$ at the sender is generated at rate $I(U_1;S|V_1)$ using a Wyner-Ziv code with $V_1$ the decoder-side information. The second helper layer $V_2$ is generated at rate $I(V_2;T|U_1,V_1)$ using a Wyner-Ziv code with $V_1$ being the common side information and $U_1$ being the decoder-side information. Accordingly, the private message of the sender $U_2$ is generated using a Wyner-Ziv code with rate $I(U_2;S|U_1,V_2)$ with side information $(U_1,V_2).$ The helper then uses a superposition code for the degraded broadcast channel with $V_1$ being the common message and $V_2$ being the private message.

\begin{corollary}\label{coro: T=S separation based scheme optimal case}
    When $S=T$, the quadruple $(R,D_1,D_2,\rho)$ is achievable with a separation-based scheme if and only if
    \begin{align}
        &R + \rho I(W;Z) \geq I(\hat{S}_1;S),\\
        &R + \rho I(W;Z) + \rho I(X;Y|W) \geq I(\hat{S}_1,\hat{S}_2;S)
    \end{align}
    with the joint distribution $P_{S\hat{S}_1\hat{S}_2}P_WP_{X|W}P_{YZ|X}$ such that
    \vspace{-0.1in}\begin{align}
        &\mathbb{E}[d(S,\hat{S}_1)]\leq D_1,\;\mathbb{E}[d(S,\hat{S}_2)]\leq D_2.
    \end{align}
\end{corollary}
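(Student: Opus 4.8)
\emph{Overall approach.} The plan is to prove the two directions separately, exploiting the fact that with $S=T$ the sender and the helper see the same source, so one may regard them as a single encoder $S^n\mapsto(M,X^l)$ feeding a noiseless rate-$R$ link (seen by both receivers) together with the degraded broadcast channel. The noiseless link and the cloud layer of a superposition code then act as one combined ``first layer'' of rate $R+\rho I(W;Z)$ delivered to the weak receiver, while the strong receiver additionally receives the satellite layer of rate $\rho I(X;Y|W)$; the whole problem collapses to successive refinement of $S$ over these two nested rate levels, which is precisely what the two inequalities encode.

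\emph{Achievability.} I would not route this through the auxiliaries of $\mathcal{R}^{in}$ directly, since with $T=S$ the variables $(U_1,U_2)$ and $(V_1,V_2)$ are forced to be conditionally independent given $S$ and a direct assignment is lossy; instead I would combine a two-layer successive-refinement source code with a superposition code for the degraded BC. Given a witness $(P_{S\hat S_1\hat S_2},W,X)$ and $(R,\rho)$ satisfying the two inequalities, take successive-refinement layer rates $R_1=I(S;\hat S_1)+\delta$ and $R_1+R_2=I(S;\hat S_1,\hat S_2)+2\delta$; this pair is admissible for \emph{any} joint law, with the first decoder reproducing $\hat S_1^n$ and the second reproducing $\hat S_2^n$. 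Because $R_1\le R+\rho I(W;Z)$, the layer-$1$ index fits into the noiseless message $M$ together with the cloud message $M_1$ (rate $\rho I(W;Z)$, decoded at the weak receiver), and the remaining noiseless/cloud rate plus the satellite rate $\rho I(X;Y|W)$ carries the layer-$2$ index by the second inequality. Taking $M_2$ to be the pair (cloud index, satellite index), which the strong receiver decodes from $Y^l$, one verifies that $(M,M_1)$ determines the whole layer-$1$ index and $(M,M_2)$ determines both layer indices, so the model's concatenated decoders $g_{12},g_{22}$ recover $\hat S_1^n,\hat S_2^n$; the usual typicality and Fano bounds then handle the message-error and distortion constraints. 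The only delicate point is that $g_{22}$ is allowed to use $(M,M_2)$ but not $M_1$, which forces $M_2$ to carry the cloud index as well --- harmless, because the strong receiver decodes the cloud in any case. (Equivalently, one may specialize Theorem~\ref{the: general one helper inner bound}, time-sharing the first-layer description between $V_1$ and $U_1$.)

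\emph{Converse.} I would run the standard degraded-broadcast-channel converse with the noiseless message $M$ carried along. From single-letterization on the source side, $nI(\hat S_1;S)\le I(S^n;\hat S_1^n)\le I(S^n;M,Z^l)$; and, appending the redundant $Z^l$ by physical degradedness, $nI(\hat S_1,\hat S_2;S)\le I(S^n;\hat S_1^n,\hat S_2^n)\le I(S^n;M,Y^l,Z^l)=I(S^n;M,Y^l)$. Using $I(S^n;M)\le nR$ reduces matters to bounding $I(S^n;Z^l\mid M)$ and $I(S^n;Y^l\mid M)$, which I would single-letterize over the channel index using the auxiliary $W_q:=(M,Y^{q-1})$ (or $W_q:=(M,Z^{q-1},Y_{q+1}^l)$), the Csisz\'ar sum identity, the memorylessness of $P_{YZ|X}$, and the Markov chain $X_q-Y_q-Z_q$, to obtain $I(S^n;Z^l\mid M)\le l\,I(W;Z)$ and $I(S^n;Y^l\mid M)\le l\,[I(W;Z)+I(X;Y|W)]$ for the \emph{same} $(W,X)$ after a time-sharing step. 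Dividing by $n$ and using $l/n\le\rho+\epsilon$ yields the two inequalities, while the distortion constraints descend to the single-letter $(\hat S_1,\hat S_2)=(\hat S_{1,J},\hat S_{2,J})$ by convexity. (Incidentally this argument does not use the separation structure, so it also shows that separation is optimal when $S=T$.)

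\emph{Main obstacle.} I expect the crux to be the converse, specifically forcing the \emph{same} pair $(W,X)$ into both rate inequalities: this is exactly the superposition tradeoff of the degraded broadcast channel, and without it one only obtains the strictly weaker pair $R+\rho\max_{P_X}I(X;Z)\ge I(\hat S_1;S)$ and $R+\rho\max_{P_X}I(X;Y)\ge I(\hat S_1,\hat S_2;S)$. Keeping the bookkeeping straight between the length-$n$ source block and the length-$l$ channel block (coupled through $X^l=f_2(S^n)$) while retaining only the marginals that enter the bound is the routine-but-fussy part.
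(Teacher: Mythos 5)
Your achievability argument is essentially the paper's: a two-layer successive-refinement codebook $\{\hat s_1^n\},\{\hat s_2^n\mid\hat s_1^n\}$ whose layer indices are split between the noiseless message $M$ and the common/private messages of a superposition code for the degraded BC, with the rate bookkeeping (the paper does it by Fourier--Motzkin elimination) yielding exactly the two stated inequalities. That part is fine.

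The converse, however, contains a genuine gap. You propose to bound $I(S^n;Z^l\mid M)\le l\,I(W;Z)$ and $I(S^n;Y^l\mid M)\le l\,[I(W;Z)+I(X;Y|W)]$ for a \emph{common} $(W,X)$ via an identification such as $W_q=(M,Y^{q-1})$, and you add that this ``does not use the separation structure.'' No such identification exists. If $W_q$ excludes the source (e.g.\ $W_q=(M,Y^{q-1})$), then $I(S^n;Z_q\mid M,Z^{q-1})$ is not dominated by $I(W_q;Z_q)$: take $R=0$ and an uncoded helper $X_q=S_q$, for which the left side is essentially $I(X_q;Z_q)$ while $I(M,Y^{q-1};Z_q)$ can be far smaller. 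If instead $W_q$ includes $S^n$, then $W_q$ determines $X_q$, so $I(X_q;Y_q\mid W_q)=0$ and your second bound collapses to $\sum_q I(X_q;Z_q)$, which the same uncoded scheme violates since $I(S^n;Y^l)\approx\sum_q I(X_q;Y_q)>\sum_q I(X_q;Z_q)$. This is not a bookkeeping nuisance; it is precisely why joint source--channel coding over broadcast channels is open. Moreover, your parenthetical conclusion that the argument ``also shows that separation is optimal when $S=T$'' is contradicted by the paper itself: the quadratic Gaussian comparison exhibits regimes (e.g.\ its cases 3 and 4) where the uncoded helper strictly outperforms $R_{Se}$, so the region of Corollary~\ref{coro: T=S separation based scheme optimal case} is \emph{not} an outer bound for general joint source--channel codes. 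The paper's converse is inherently separation-based: it writes $R+H(M_1)\ge I(M;S^n\mid M_1)+I(M_1;S^n)=I(M,M_1;S^n)\ge I(\hat S_1^n;S^n)\ge nI(\hat S_1;S)$ and the analogous chain with $(M_1,M_2)$, and then uses Fano's inequality (the messages $M_1$ and $(M_1,M_2)$ are required to be decodable from $Z^l$ and $Y^l$, respectively) together with the standard degraded/more-capable BC channel-coding converse to obtain $H(M_1)\le l\,I(W;Z)+o(l)$ and $H(M_1,M_2)\le l\,(I(W;Z)+I(X;Y|W))+o(l)$ with a single $(W,X)$. It is exactly the digital interface you discard that pins down the common auxiliary $W$ in both inequalities.
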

The converse is given in Section \ref{app: proof of T=S separation based scheme optimal case}. The achievability follows by constructing lossy descriptions $\{\hat{s}^n_1\},\{\hat{s}^n_2\}$ by the marginal distributions $P_{\hat{S}_1}P_{\hat{S}_2|\hat{S}_1}$ and dividing their indices into two parts, sent by the sender and the helper, respectively. The details are provided in Appendix \ref{app: achievability of T=S separation case}.
In the following, we consider a special case in which the distortion function at the weak receiver is a deterministic distortion function\cite{gamal1982achievable}\cite{timo2014source}.
\begin{definition}
    Given a function $\psi:\mathcal{S}\to \widetilde{\mathcal{S}}$ of the source, a distortion function $d:\mathcal{S}\times\hat{\mathcal{S}}\to [0,+\infty)$ is called a deterministic distortion if 
    \begin{align}
        d(s,\hat{s})=\left\{
        \begin{aligned}
            0,\;\; \hat{s} = \psi(s),\\
            1,\;\; \hat{s} \neq \psi(s).
        \end{aligned}
        \right.
    \end{align}
\end{definition}
If we apply the deterministic distortion criterion to the weak receiver and set $D_1=0$, we require the weak receiver to reconstruct a function of the source $S$ losslessly. 
\begin{corollary}\label{coro: one helper deterministic distortion capacity region}
    When the deterministic distortion is applied to the weak receiver, the quadruple $(R,0,0,\rho)$ is achievable with a separation-based scheme if and only if
    \begin{align}
        \label{ine: deterministic distortion rate constraint 1}&R > H(\widetilde{S}|V_1) + H(S|\widetilde{S},V_2),\\
        \label{ine: deterministic distortion constraint 2}&I(V_1;T) \leq \rho I(W;Z),\\
        \label{ine: deterministic distortion rate constraint 3}&I(V_1;T) + I(V_2;T|\widetilde{S},V_1)\leq \rho I(W;Z)+ \rho I(X;Y|W).
    \end{align}
with the joint distribution $P_{ST}P_{V_2|T}P_{V_1|V_2}P_WP_{X|W}P_{YZ|X}.$
\end{corollary}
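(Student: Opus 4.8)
\emph{Proof proposal.} The plan is to obtain the inner bound by instantiating Theorem~\ref{the: general one helper inner bound} and the converse directly from the definition of a separation-based code. For achievability I would evaluate the region $\mathcal{R}^{in}$ of Definition~\ref{def: definition of the one helper inner boud region} at $U_1=\widetilde{S}=\psi(S)$ and $U_2=S$. Since $\widetilde{S}$ is a deterministic function of $S$, $I(U_1;S|V_1)=H(\widetilde{S}|V_1)$ and $I(U_2;S|U_1,V_2)=H(S|\widetilde{S},V_2)$, so the sender--rate inequality collapses to $R>H(\widetilde{S}|V_1)+H(S|\widetilde{S},V_2)$, the two helper inequalities are unchanged, and the product law becomes $P_{ST}P_{V_2|T}P_{V_1|V_2}P_WP_{X|W}P_{YZ|X}$. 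Choosing $h_1(u_1,v_1)=u_1$ and $h_2(u_1,u_2,v_2)=u_2$ gives $d(S,\widetilde{S})=d(S,S)=0$, i.e.\ $D_1=D_2=0$, so every such quadruple lies in $\mathcal{R}$.

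For the converse, fix a separation-based code attaining $(R,0,0,\rho)$ with error parameter $\epsilon$ and write $\bar{R}_j=\tfrac1n\log|\mathcal{M}_j|$. First I would invoke Fano's inequality: $\Pr\{(M_1,M_2)\neq(\hat{M}_1,\hat{M}_2)\}\le\epsilon$ gives $H(M_1|Z^l),H(M_2|Y^l)=o(n)$, and since $P_{YZ|X}=P_{Y|X}P_{Z|Y}$ is physically degraded the strong receiver can simulate $P_{Z|Y}$ on $Y^l$ and run $g_{11}$, so it also decodes $M_1$, i.e.\ $H(M_1|Y^l)=o(n)$. Combined with $D_1=D_2=0$ (so the weak receiver reproduces $\widetilde{S}^n$ and the strong receiver reproduces $S^n$ losslessly) this yields $H(\widetilde{S}^n|M,M_1)=o(n)$ and $H(S^n|M,M_1,M_2)=o(n)$. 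The channel side is then the textbook converse for the degraded broadcast channel: with an appropriate single-letter $W$ (and time sharing), $\bar{R}_1\le\rho I(W;Z)+o(1)$ and $\bar{R}_1+\bar{R}_2\le\rho I(W;Z)+\rho I(X;Y|W)+o(1)$ with $W-X-(Y,Z)$; note $W,X$ are built from the channel code alone, hence independent of everything formed from $(S^n,T^n)$.

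The substantive part is the source side. I would take a uniform time-sharing index $Q$ on $[1:n]$ and set $V_1=(M_1,T^{Q-1},Q)$, $V_2=(M_1,M_2,S^{Q-1},T^{Q-1},\widetilde{S}_{Q+1}^n,Q)$, with $S=S_Q$, $T=T_Q$, $\widetilde{S}=\psi(S_Q)$. The i.i.d.\ property of $(S_i,T_i)$ together with $S_i-T_i$ makes $V_1$ a function of $V_2$ and makes $(V_1,V_2)-T-S$ hold, which is exactly the factorization $P_{ST}P_{V_2|T}P_{V_1|V_2}$. For the helper rates, $n\bar{R}_j\ge H(M_j)$ together with the chain-rule identity $I(M_1;S^n,T^n)=\sum_i I(M_1;T_i|S^{i-1},T^{i-1})$ (cross terms vanish by i.i.d.) gives $\bar{R}_1\ge I(V_1;T)$, and after a further expansion $\bar{R}_1+\bar{R}_2\ge I(V_1;T)+I(V_2;T|\widetilde{S},V_1)$; the key point is that $\widetilde{S}_i=\psi(S_i)$ is conditionally independent of $(M_1,M_2)$ given $T_i$, so conditioning on $\widetilde{S}$ never increases these quantities, while the Csisz\'ar sum identity absorbs the terms created by the future symbols $\widetilde{S}_{Q+1}^n$. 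For the sender rate I would start from $nR\ge H(M)=I(S^n;M)$ and split $I(S^n;M|M_1)=I(\widetilde{S}^n;M|M_1)+I(S^n;M|\widetilde{S}^n,M_1)\ge I(\widetilde{S}^n;M|M_1)+I(S^n;M|\widetilde{S}^n,M_1,M_2)$, apply Fano to each term to get $I(\widetilde{S}^n;M|M_1)\ge H(\widetilde{S}^n|M_1)-o(n)$ and $I(S^n;M|\widetilde{S}^n,M_1,M_2)\ge H(S^n|\widetilde{S}^n,M_1,M_2)-o(n)$, and single-letterize these two entropies to $nH(\widetilde{S}|V_1)$ and $nH(S|\widetilde{S},V_2)$. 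Combining the source and channel bounds, letting $n\to\infty$ and $\epsilon\to0$, and running the support-lemma argument for the cardinalities completes the converse.

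I expect the main obstacle to be precisely this source step: the auxiliary $V_2$ must be engineered so that it simultaneously preserves $V_1-V_2-T-S$, makes $nR$ single-letterize to exactly $H(\widetilde{S}|V_1)+H(S|\widetilde{S},V_2)$, and does not loosen the two helper-rate inequalities --- which is what forces $V_2$ to carry the future reconstruction symbols $\widetilde{S}_{Q+1}^n$ and to invoke the Csisz\'ar sum identity, and is also where one crucially uses that the weak receiver's target is a function of $S$ alone.
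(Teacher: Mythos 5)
Your achievability argument is exactly the paper's: set $U_1=\widetilde{S}$, $U_2=S$ in Theorem~\ref{the: general one helper inner bound}. The converse also follows the same overall architecture as the paper (Fano at both receivers, the degraded-BC channel converse for the helper, and a time-sharing single-letterization), but the step you yourself flag as ``the main obstacle'' --- choosing $V_1,V_2$ so that the sender-rate bound, the two helper-rate bounds, and the factorization $P_{ST}P_{V_2|T}P_{V_1|V_2}$ all close simultaneously --- is precisely the step you leave as an assertion, and your choice of auxiliaries is not the one that makes it close. You take $V_1=(M_1,T^{Q-1},Q)$; the paper takes $V_{1,i}=(M_1,\widetilde{S}^{i-1})$, built from the past \emph{reconstruction} symbols rather than the past helper observations, and $V_{2,i}=(M_2,V_{1,i},\widetilde{S}^n_{i+1},T^{i-1})$. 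This choice is what makes the bookkeeping exact: first, $H(\widetilde{S}^n|M_1)=\sum_i H(\widetilde{S}_i|M_1,\widetilde{S}^{i-1})$ single-letterizes to $nH(\widetilde{S}|V_1)$ with equality; second, the identity behind \eqref{ine: deterministic distortion converse ine 1}, namely $I(M_1;T^n)=\sum_i I(M_1,\widetilde{S}^{i-1};T_i)+I(\widetilde{S}^{n}_{i+1},T^{i-1};T_i|M_1,\widetilde{S}^{i-1},\widetilde{S}_i)$, shows that the nonnegative remainder discarded when proving $H(M_1)\ge nI(V_1;T)$ is \emph{exactly} the extra term needed in $I(V_2;T|\widetilde{S},V_1)$, so the sum-rate helper bound follows by direct accounting with no Csisz\'ar-sum manipulation at all.

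With your $V_1=(M_1,T^{Q-1},Q)$ the first helper bound is immediate and the sender-rate bound can still be rescued (via the Markov chain $\widetilde{S}^{i-1}-(M_1,T^{i-1})-\widetilde{S}_i$), but the sum-rate helper bound becomes the claim $\sum_i I(M_2;T_i|M_1,T^{i-1})\ge\sum_i I(M_2,S^{i-1},\widetilde{S}^n_{i+1};T_i|\widetilde{S}_i,M_1,T^{i-1})$, in which the conditioning set varies with $i$ in a way that blocks a direct application of the Csisz\'ar sum identity; you give no derivation, and it is not clear the inequality holds for this choice. So the proposal is a correct skeleton with the decisive step missing: the actual content of the paper's converse is the specific identification of $V_1$ with the past of $\widetilde{S}$ and the exact telescoping it enables.
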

The proof is given in Section \ref{app: proof of corollary one helper deterministic distortion capacity region}.
The separation-based scheme is optimal when $T=S$ and the deterministic distortion measure is applied to the weak receiver with $D_1=0.$
\begin{corollary}
    When $T=S$ with the deterministic distortion measure at the weak receiver, a quadruple $(R,0,D_2,\rho)$ is achievable if and only if
    \begin{align}
        &R + \rho I(W;Z) \geq H(\widetilde{S}),\\
        &R + \rho I(W;Z) + \rho I(X;Y|W) \geq H(\widetilde{S}) + \bar{R}_{S|\widetilde{S}}(D_2),
    \end{align}
    where $\bar{R}_{S|\widetilde{S}}(D_2)=\min_{P_{\hat{S}|S\widetilde{S}}} I(\hat{S};S|\widetilde{S})$ such that $\mathbb{E}[d(S,\hat{S})]\leq D_2$ is the conditional rate--distortion function.
\end{corollary}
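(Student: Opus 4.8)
\emph{Proof plan.} I would show that the region described by the two inequalities is simultaneously achievable by a separation-based code and an outer bound for every joint source--channel code, proving the two directions by specializing Corollary~\ref{coro: T=S separation based scheme optimal case} and Theorem~\ref{coro: general outer bound degraded case}, respectively. The structural fact used in both directions is that a deterministic distortion with $D_1=0$ leaves no freedom at the weak receiver: any admissible $\hat S_1$ must equal $\psi(S)=\widetilde S$ with probability one, so $I(\hat S_1;S)=H(\widetilde S)$ automatically.

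For achievability I would invoke Corollary~\ref{coro: T=S separation based scheme optimal case} with $\hat S_1=\psi(S)$, with $\hat S_2=\hat S^\star$ chosen to attain $\bar R_{S|\widetilde S}(D_2)=\min_{P_{\hat S|S\widetilde S}}I(\hat S;S|\widetilde S)$ subject to $\mathbb{E}[d(S,\hat S)]\le D_2$, and with $(W,X)$ the pair witnessing membership in the stated region. Then $I(\hat S_1;S)=H(\widetilde S)$ and, by the chain rule, $I(\hat S_1,\hat S_2;S)=H(\widetilde S)+I(\hat S^\star;S|\widetilde S)=H(\widetilde S)+\bar R_{S|\widetilde S}(D_2)$, so the two inequalities of Corollary~\ref{coro: T=S separation based scheme optimal case} become exactly the two inequalities in the statement and hold, while $\mathbb{E}[d(S,\hat S_1)]=0\le D_1$ and $\mathbb{E}[d(S,\hat S^\star)]\le D_2$. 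Hence $(R,0,D_2,\rho)$ is achievable with a separation-based scheme.

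For the converse I would evaluate the general outer bound Theorem~\ref{coro: general outer bound degraded case} at $T=S$ and at the particular auxiliary $U=\psi(S)=\widetilde S$. Since $\hat S_1=\widetilde S=U$ is forced, the second rate bound reads $R\ge H(\widetilde S)-I(\widetilde S;V_1)$ and the first channel bound reads $\rho I(W;Z)\ge I(V_1;\widetilde S)$; adding them gives $R+\rho I(W;Z)\ge H(\widetilde S)$. For the second inequality, the first rate bound becomes $R\ge H(\widetilde S)+I(\hat S_2;S|\widetilde S)-I(\widetilde S;V_1)-I(V_2;S|\widetilde S)$ (using $I(\hat S_1,\hat S_2;S|U)=I(\hat S_2;S|\widetilde S)$) and the second channel bound becomes $\rho I(W;Z)+\rho I(X;Y|W)\ge I(V_1;\widetilde S)+I(V_2;S|\widetilde S)$ (using $T=S$ in $I(V_2;T|U)$); adding these, the $I(\widetilde S;V_1)$ and $I(V_2;S|\widetilde S)$ terms cancel, leaving $R+\rho I(W;Z)+\rho I(X;Y|W)\ge H(\widetilde S)+I(\hat S_2;S|\widetilde S)$. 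Finally, because $\widetilde S=\psi(S)$ is a deterministic function of $S$, the marginal $P_{\hat S_2|S}$ is feasible for the optimization defining $\bar R_{S|\widetilde S}(D_2)$ and $I(\hat S_2;S|\widetilde S)$ depends only on $P_{S\hat S_2}$, so $I(\hat S_2;S|\widetilde S)\ge\bar R_{S|\widetilde S}(D_2)$, sharpening the last bound to the second inequality in the statement.

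The only non-routine step is the converse: one must recognize that $U=\widetilde S$ is the right specialization of the universally-quantified outer bound, and then check that the auxiliary terms $I(\widetilde S;V_1)$ and $I(V_2;S|\widetilde S)$ — which are not under our control — cancel precisely when the rate and channel inequalities are combined in the correct pairs, a cancellation that relies on the hypothesis $T=S$. Identifying the leftover term with the conditional rate--distortion function $\bar R_{S|\widetilde S}$ is then immediate, since conditioning on the deterministic function $\psi(S)$ is exactly the common side information $\widetilde S$ appearing in its definition.
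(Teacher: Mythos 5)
Your proposal is correct and follows essentially the same route as the paper: achievability by specializing Corollary~\ref{coro: T=S separation based scheme optimal case} with $\hat S_1=\widetilde S$ and an optimal $\hat S_2$, and the converse by instantiating Theorem~\ref{coro: general outer bound degraded case} with $U=\hat S_1=\widetilde S$ and $T=S$ so that the $I(\widetilde S;V_1)$ and $I(V_2;S|\widetilde S)$ terms cancel when the rate and channel bounds are paired. Your write-up is in fact more explicit than the paper's about which inequalities are added and why the auxiliary terms cancel.
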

\begin{proof}
For the achievability, use Corollary \ref{coro: T=S separation based scheme optimal case} by setting $\hat{S}_1=\widetilde{S}$ and choosing $\hat{S}_2$ such that $I(\hat{S}_2;S|\widetilde{S})=R_{S|\widetilde{S}}(D_2)$. For the converse, invoking Theorem \ref{coro: general outer bound degraded case} by setting $\hat{S}_1=U=\widetilde{S},T=S$, it gives us the bounds
\begin{align*}
    &R + \rho I(W;Z) \geq H(\widetilde{S}),\\
    &R + \rho (I(W;Z) + I(X;Y|W)) \geq H(\widetilde{S}) + I(\hat{S}_2;S|\widetilde{S})\\
    &\quad\quad\quad\quad\quad\quad\quad\quad\quad\quad\quad\;\; \geq H(\widetilde{S}) + \bar{R}_{S|\widetilde{S}}(D_2).\notag \qedhere
\end{align*}

\section{Quadratic Gaussian Case when $\rho=1$}
In this section, we compare the performance of the uncoded and separation schemes for the quadratic Gaussian source when $\rho=1$.

Consider the case that $S\sim\mathcal{N}(0,\sigma_s^2)$ and $T=S$. Furthermore, the channel is a degraded broadcast channel with additive Gaussian noise:
\begin{align}
    &Y = X + N_2,\;\;Z = Y + N_1,
\end{align}
where $N_1 \sim \mathcal{N}(0,\sigma_1^2),N_2 \sim \mathcal{N}(0,\sigma_2^2)$ are independent Gaussian random variables. Without loss of generality, we always assume $D_2 \leq D_1 \leq \sigma_s^2.$ When an uncoded scheme is applied to the helper, the problem reduces to source coding with degraded side information, and the corresponding rate--distortion function was studied in \cite{tian2007multistage}\cite{heegard1985rate}. 

Define 
\begin{align}
    &D_1^* = \frac{\sigma_s^2(\sigma_1^2+\sigma_2^2)}{\sigma_s^2+\sigma_1^2+\sigma_2^2},\;\;D_2^*=\frac{\sigma_s^2\sigma_2^2}{\sigma_s^2+\sigma_2^2},\\
    &\gamma = \frac{\sigma_2^2}{\sigma_1^2+\sigma_2^2},\widetilde{D}_1^* = \frac{\gamma \sigma_1^2D_2}{\gamma \sigma_1^2-(1-\gamma)^2D_2}.
\end{align}
It was shown in \cite{tian2007multistage} that there are four different cases according to different values of $D_1$ and $D_2$:

    \noindent\emph{1).} $\widetilde{D}_1^* \leq D_1 \leq D_1^*$ and $0<D_2\leq D_2^*$: This is the only non-degenerate case and the rate--distortion function is
    \begin{align*}
        R_U(D_1,D_2) = \frac{1}{2}\log \frac{\sigma_s^2\sigma_1^2\sigma_2^2}{D_2(\sigma_s^2+\sigma_1^2+\sigma_2^2)((1-\gamma)^2D_1+\gamma\sigma_1^2)}
    \end{align*}
    \noindent\emph{2).} $D_1 > D_1^*$ and $0 < D_2 \leq D_2^*:$ In this case, encoding for the weak receiver is not necessary and
    \begin{align}
        R_U(D_1,D_2) = R_{S|Y}(D_2),
    \end{align}
    where $R_{S|Y}(D)$ is the Wyner-Ziv rate--distortion function with side information $Y=S+N_2$.
    
    \noindent\emph{3).} $D_1 \leq D_1^*$ and $D_1 \leq \widetilde{D}_1^*:$ In this case, encoding for the strong receiver is not necessary and
    \begin{align}
        R_U(D_1,D_2) = R_{S|Z}(D_1),
    \end{align}
    where $R_{S|Z}(D)$ is the Wyner-Ziv rate--distortion function with side information $Z=S+N_1+N_2$.
    
    \noindent\emph{4).} $D_1 > D_1^*$ and $D_2 > D_2^*$: $R_U(D_1,D_2)=0$.

We next show the rate--distortion function for the separation-based scheme. When $T=S$, invoking Corollary \ref{coro: T=S separation based scheme optimal case} gives
\begin{align}
    &R+ \frac{1}{2}\log \frac{P+\sigma_1^2+\sigma_2^2}{\alpha P + \sigma_1^2+\sigma_2^2} \overset{(a)}{\geq} R + I(W;Z)\overset{(b)}{\geq} R_S(D_1),\\
    &R+ \frac{1}{2}\log \frac{P+\sigma_1^2+\sigma_2^2}{\alpha P + \sigma_1^2+\sigma_2^2} + \frac{1}{2}\log \frac{\alpha P + \sigma_2^2}{\sigma_2^2}\notag \\
    &\overset{(c)}{\geq} R + I(W;Z) + I(X;Y|W)\overset{(d)}{\geq} R_S(D_2),
\end{align}
for some $\alpha\in[0,1],$ where $(a)$ and $(c)$ follow from the converse of Gaussian degraded broadcast channels \cite[Section 5.5.2]{el2011network}, $(b)$ and $(d)$ follow from the definition of the rate--distortion function. The achievability follows by setting $W\sim\mathcal{N}(0,(1-\alpha)P), V\sim \mathcal{N}(0,\alpha P)$ and $X=W + V$ for the channel coding and the cascade of Gaussian test channels
\begin{align}
    S = \hat{S}_2 + U_2 = (\hat{S}_1 + U_1) + U_2,
\end{align}
where $\hat{S}_1\sim \mathcal{N}(0,\sigma_s^2 - D_1),U_1\sim\mathcal{N}(0,D_1-D_2),U_2\sim\mathcal{N}(0,D_2).$ Here we use the fact that a Gaussian source with mean squared error is successively refinable. The corresponding rate--distortion function in this case is
\begin{align}
    R_{Se}(D_1,D_2) = \min_{\alpha\in[0,1]}\max\left\{\frac{1}{2}\log \frac{\sigma_s^2(\alpha P + \sigma_1^2+\sigma_2^2)}{D_1(P+\sigma_1^2+\sigma_2^2)},\right.\\
    \left.\frac{1}{2}\log \frac{\sigma_s^2\sigma_2^2(\alpha P + \sigma_1^2+\sigma_2^2)}{D_2(P+\sigma_1^2+\sigma_2^2)(\alpha P + \sigma_2^2)}\right\}.
\end{align}
We further restrict $P=\sigma_s^2$, so that the two schemes have the same input power. Denote the two terms in $\max \{\cdot\}$ by $R_1(\alpha)$ and $R_2(\alpha)$, respectively. Note that $R_1$ is an increasing function of $\alpha$ and $R_2$ is a decreasing function of $\alpha$. Furthermore, when $D_1 \frac{\sigma_2^2}{\sigma_s^2+\sigma_2^2}\leq D_2$, there exists an $\alpha\in[0,1]$ such that $R_1(\alpha)=R_2(\alpha)$ and the minimum of $R_{Se}(D_1,D_2)$ is achieved. Otherwise, $R_2(\alpha)\geq R_1(\alpha)$ always hold. Denote $\widetilde{D}_2^*:=D_1 \frac{\sigma_2^2}{\sigma_s^2+\sigma_2^2}$. In the following, we compare the rate--distortion function with the separation-based scheme and the uncoded scheme.

    \noindent\emph{1).} $\widetilde{D}_1^* \leq D_1 \leq D_1^*$ and $\widetilde{D}_2^*<D_2\leq D_2^*$: In this case, $R_{Se}(D_1,D_2)$ achieves its minimum when $\alpha = \frac{\sigma_2^2}{\sigma_s^2}\frac{D_1-D_2}{D_2}$ and
    \begin{align}\label{ine: gaussian separation rate for nondegenerate case}
        R_{Se}(D_1,D_2) = \frac{1}{2}\log \frac{\sigma_s^2(\sigma_2^2 D_1 + \sigma_1^2 D_2)}{D_1D_2(\sigma_s^2+\sigma_1^2+\sigma_2^2)}.
    \end{align}
    One can easily verify that $R_{Se}(D_1,D_2)\leq R_U(D_1,D_2)$ if and only if
    \begin{align}
        \widetilde{D}_2^*\leq D_2 \leq \frac{D_1\sigma_2^2(\sigma_1^2+\sigma_2^2-D_1)}{\sigma_1^2D_1+\sigma_2^2(\sigma_1^2+\sigma_2^2)}.
    \end{align}
    Such $D_2$ exists when $D_1 \leq D_1^*.$
    
    \noindent\emph{2).} $\widetilde{D}_1^* \leq D_1 \leq D_1^*$ and $D_2\leq \widetilde{D}_2^*< D_2^*$: In this case, $R_2(\alpha)\geq R_1(\alpha)$ always holds and the minimum of $R_{Se}(D_1,D_2)$ is achieved when $\alpha = 1$, which implies $R_{Se}(D_1,D_2) = R_{S|Y}(D_2)$. By the assumption that $D_1 \leq D_1^*,$ it follows that $R_{Se}(D_1,D_2)\leq R_U(D_1,D_2).$
    
    \noindent\emph{3).} $D_1 > D_1^*$ and $0 < D_2 \leq D_2^*:$ As $R_{Se}(D_1,D_2)\geq R_{S|Y}(D_2)$ always holds, in this case we have $R_{Se}(D_1,D_2)\geq R_U(D_1,D_2)$.
    
     \noindent\emph{4).} $D_1 \leq D_1^*$ and $D_1 \leq \widetilde{D}_1^*$ and $\widetilde{D}_2^*<D_2\leq D_2^*$: In this case the uncoded rate $R_U(D_1,D_2)=R_{S|Z}(D_1)$ and the separation-based rate is \eqref{ine: gaussian separation rate for nondegenerate case}. It follows that
     \begin{align}
            &R_{Se}(D_1,D_2) \geq R_U(D_1,D_2)\\
         \Leftrightarrow &\frac{\sigma_s^2(\sigma_2^2 D_1 + \sigma_1^2 D_2)}{D_1D_2(\sigma_s^2+\sigma_1^2+\sigma_2^2)} \geq \frac{\sigma_s^2(\sigma_1^2+\sigma_2^2)}{D_1(\sigma_s^2+\sigma_1^2+\sigma_2^2)}\\
         \Leftrightarrow &\frac{\sigma_2^2 D_1 + \sigma_1^2 D_2}{D_2} \geq \sigma_1^2+\sigma_2^2\\
         \Leftrightarrow & \sigma_1^2 + \sigma_2^2 \frac{D_1}{D_2} \geq \sigma_1^2 + \sigma_2^2.
     \end{align}
     \noindent\emph{5).} $D_1 \leq D_1^*$ and $D_1 \leq \widetilde{D}_1^*$ and $D_2\leq \widetilde{D}_2^*\leq D_2^*$: In this case, by $D_2 \leq \widetilde{D}_2^*$ we have a lower bound on $D_1:$
     \begin{align}
         D_1 \geq D_2 \frac{\sigma_s^2+\sigma_2^2}{\sigma_2^2}.
     \end{align}
     Combining the upper bounds $D_1\leq D_1^*$ and $D_1\leq \widetilde{D}_1^*$ gives
     \begin{align}
        &D_2 \frac{\sigma_s^2+\sigma_2^2}{\sigma_2^2} \leq D_1\leq \frac{\sigma_s^2(\sigma_1^2+\sigma_2^2)}{\sigma_s^2+\sigma_1^2+\sigma_2^2}\\
        \Leftrightarrow & D_2 \leq \frac{\sigma_s^2\sigma_2^2(\sigma_1^2+\sigma_2^2)}{(\sigma_s^2+\sigma_1^2+\sigma_2^2)(\sigma_1^2+\sigma_2^2)}
     \end{align}
     and
     \begin{align}
         &D_2 \frac{\sigma_s^2+\sigma_2^2}{\sigma_2^2} \leq D_1\leq\frac{\gamma \sigma_1^2D_2}{\gamma \sigma_1^2-(1-\gamma)^2D_2}\\
         \Leftrightarrow & D_2 \geq \frac{\sigma_s^2\sigma_2^2(\sigma_1^2+\sigma_2^2)}{\sigma_1^2(\sigma_s^2+\sigma_2^2)},
     \end{align}
     which is a contradiction since $\sigma_1^2(\sigma_s^2+\sigma_2^2)\leq (\sigma_s^2+\sigma_1^2+\sigma_2^2)(\sigma_1^2+\sigma_2^2).$

\begin{figure}
    \centering
    \includegraphics[scale=0.5]{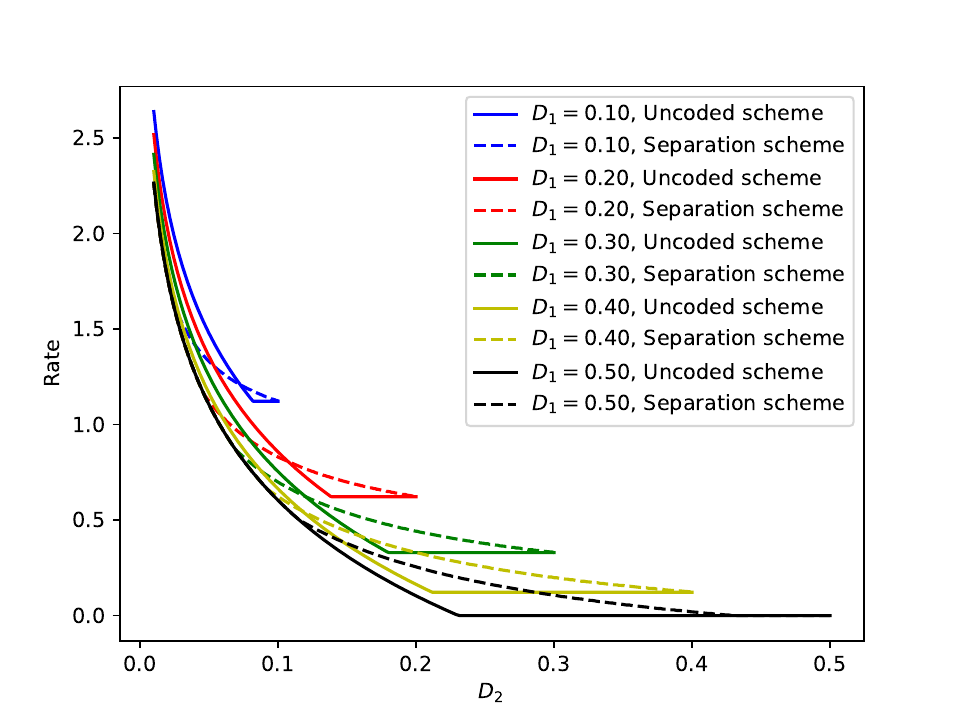}
    \caption{\footnotesize Comparison between the uncoded scheme and the separation-based scheme for different values of $D_1$ when $\sigma_s^2=1,\sigma_1^2=0.6,\sigma_2^2=0.3$.}
    \label{fig:gaussian example}
\end{figure}
Fig. \ref{fig:gaussian example} shows the rate--distortion function $R(D_2)$ of the uncoded scheme and the separation-based scheme given different values of $D_1$. %It is shown that the separation-based scheme outperforms the uncoded scheme when both $D_1$ and $D_2$ are small.
\end{proof}

\section{proof of Theorem \ref{coro: general outer bound degraded case}}\label{app: proof of corollary general outer bound degraded case}
Fix a joint distribution $P_{ST}P_{U|T}$ and generate an auxiliary sequence $U^n$ by $P_{U|T}^n=\prod_{i=1}^n P_{U|T}$. It follows that
    \begin{align*}
        &nR \geq H(M)\\
        &\geq I(M;U^n,S^n|Z^l)\\
        &=I(M;U^n|Z^l) + I(M;S^n|Z^l,U^n)\\
        &\overset{(a)}{\geq} I(M;U^n|Z^l) + I(M;S^n|Z^l,U^n,Y^l)\\
        &=I(M,Z^l;U^n) - I(U^n;Z^l) + I(M,Z^l,Y^l;S^n|U^n) \\
        &\quad\quad\quad\quad\quad\quad\quad\quad - I(Z^l,Y^l;S^n|U^n)\\
        &\geq I(\hat{S}^n_1;U^n) - I(U^n;Z^l) + I(\hat{S}^n_1,\hat{S}^n_2;S^n|U^n)\\
        &\quad\quad\quad\quad\quad\quad\quad\quad- I(Z^l,Y^l;S^n|U^n)\\
        &\geq \sum_{i=1}^n I(\hat{S}_{1,i};U_i) - I(U_i;Z^l,U^{i-1}) + I(\hat{S}_{1,i},\hat{S}_{2,i};S_i|U_i)\\
        &\quad\quad\quad\quad\quad\quad\quad\quad - I(Z^l,Y^l,S^{i-1},T^{i-1},U^{n\backslash i};S_i|U_i)\\
        &\overset{(b)}{\geq}n(I(\hat{S}_1;U) + I(\hat{S}_1,\hat{S}_2;S|U) - I(U;V_1) - I(V_1,V_2;S|U))
    \end{align*}
    where $(a)$ follows from the fact that condition reduces entropy and the Markov chain $M-S^n-(Z^l,U^n,Y^l)$, $(b)$ follows by introducing a time-sharing random variable $Q$ and setting $\hat{S}=\hat{S}_Q,V_{1,Q} = (Z^l,U^{Q-1},Q),V_1 = V_{1,Q},V_2=V_{2,Q}=(Y^l,S^{Q-1},T^{Q-1},U^{n\backslash Q},Q)$.
    
    For the constraints of the helper,
    \begin{align}
        &l I(W;Z)=\sum_{i=1}^l I(U^n,Z^{n}_{i+1},Y^{i-1};Z_i)\\
        &\geq \sum_{i=1}^l I(U^n,Z^{n}_{i+1};Z_i)\\
        &\geq I(U^n;Z^l) = \sum_{i=1}^n I(U_i;Z^l,U^{i-1})=n I(U;V_1),
    \end{align}
    where the first equality follows by introducing a time-sharing random variable $\widetilde{Q}$ and setting $W=W_{\widetilde{Q}}= (U^n,Z^{n}_{\widetilde{Q}+1},Y^{\widetilde{Q}-1},\widetilde{Q})$. Next,
    \begin{align}
        &I(U^n;Z^l) + I(T^n;Y^l|U^n) \\
        &\overset{(a)}{=} \sum_{i=1}^n I(U_i;Z^l,U^{i-1}) + I(T_i;Y^l|U^n,T^{i-1},S^{i-1})\\
        &=\sum_{i=1}^n I(U_i;Z^l,U^{i-1}) + I(T_i;Y^l,T^{i-1},U^{n \backslash i},S^{i-1}|U_i)\\
        &\geq n I(U;V_1) + n I(T;V_2|U),
    \end{align}
    where $(a)$ follows from the Markov chain $S^{i-1}-(T^{i-1},Y^l,U^n)-T_i$.
    The upper bound proof of $I(U^n;Z^l) + I(T^n;Y^l|U^n)$ is similar to the converse of more capable broadcast channels and is omitted here.

\section{proof of corollary \ref{coro: T=S separation based scheme optimal case}}\label{app: proof of T=S separation based scheme optimal case}
\emph{Converse: } For the first bound,
\begin{align}
    R + H(M_1) &\geq I(M;S^n|M_1) + I(M_1;S^n) \\
    &= I(M,M_1;S^n) \geq I(\hat{S}^n_1;S^n)\\
    &=\sum_{i=1}^n I(\hat{S}^n_1,S^{i-1};S_i)\\
    &\geq \sum_{i=1}^n I(\hat{S}_{1i};S_i)\geq n I(\hat{S}_1,S),
\end{align}
where the last step follows by introducing a time-sharing random variable.
For the second bound,
\begin{align}
    &R + H(M_1,M_2)\geq I(M;S^n|M_1,M_2) + I(M_1,M_2;S^n)\\
    &=I(M,M_1,M_2;S^n) \geq I(\hat{S}^n_1,\hat{S}^n_2;S^n) \geq n I(\hat{S}_1,\hat{S}_2;S).
\end{align}
The upper bounds follow by using Fano's inequality and the converse for the more capable broadcast channel \cite{el2011network} and are omitted here.

\section{proof of corollary \ref{coro: one helper deterministic distortion capacity region}}\label{app: proof of corollary one helper deterministic distortion capacity region}
The achievability follows by setting $U_1=\widetilde{S},U_2=S$ in Theorem \ref{the: general one helper inner bound}. To see the converse, define
\begin{align*}
    &V_{1,i} = (M_1,\widetilde{S}^{i-1}),\; V_{1,Q} = (M_1,\widetilde{S}^{Q-1},Q)\\
    &V_{2,i} = (M_2,V_{1,i},\widetilde{S}^n_{i+1},T^{i-1}),\;V_{2,Q} = (M_2,V_{1,Q},\widetilde{S}^n_{Q+1},T^{Q-1}).
\end{align*} 
We have
\begin{align}
    &H(M_1) \geq I(M_1;T^n)\notag \\
    &=\sum_{i=1}^n I(M_1,\widetilde{S}^n;T^n)-I(\widetilde{S}^n;T^n|M_1)\notag\\
    &=\sum_{i=1}^n I(M_1,\widetilde{S}^n;T_i|T^{i-1}) - I(\widetilde{S}_i;T^n|M_1,\widetilde{S}^{i-1})\notag\\
    &\overset{(a)}{=}\sum_{i=1}^n I(M_1,\widetilde{S}^n,T^{i-1};T_i) - I(\widetilde{S}_i;T_i|M_1,\widetilde{S}^{i-1})\notag\\
    &=\sum_{i=1}^n I(M_1,\widetilde{S}^{i-1};T_i) +I(\widetilde{S}^{n}_{i},T^{i-1};T_i|M_1,\widetilde{S}^{i-1}) \notag\\
    &\quad\quad\quad\quad - I(\widetilde{S}_i;T_i|M_1,\widetilde{S}^{i-1})\notag\\
    &\label{ine: deterministic distortion converse ine 1}=\sum_{i=1}^n I(M_1,\widetilde{S}^{i-1};T_i) +I(\widetilde{S}^{n}_{i+1},T^{i-1};T_i|M_1,\widetilde{S}^{i-1},\widetilde{S}_i)\\
    &\geq \sum_{i=1}^n I(M_1,\widetilde{S}^{i-1};T_i)\notag
\end{align}
where $(a)$ is by the Markov chain $\widetilde{S}_i-T^{n\backslash i}-(T_i,M_1,\widetilde{S}^{i-1})$, and
\begin{align*}
    &H(M_1) + H(M_2|M_1)\\
    &\overset{(a)}{\geq} \sum_{i=1}^n I(M_1,\widetilde{S}^{i-1};T_i) +I(\widetilde{S}^{n}_{i+1},T^{i-1};T_i|M_1,\widetilde{S}^{i-1},\widetilde{S}_i)\\
    &\quad\quad\quad\quad\quad\quad\quad\quad + I(T^n;M_2|\widetilde{S}^n,M_1)\\
    &=\sum_{i=1}^n I(M_1,\widetilde{S}^{i-1};T_i) +I(\widetilde{S}^{n}_{i+1},T^{i-1};T_i|M_1,\widetilde{S}^{i-1},\widetilde{S}_i)\\
    &\quad\quad\quad\quad\quad\quad\quad\quad + I(T_i;M_2|M_1,\widetilde{S}^{i-1},\widetilde{S}_i,\widetilde{S}_{i+1}^n,T^{i-1})\\
    &=\sum_{i=1}^n I(M_1,\widetilde{S}^{i-1};T_i) +I(\widetilde{S}^{n}_{i+1},T^{i-1},M_2;T_i|M_1,\widetilde{S}^{i-1},\widetilde{S}_i)
\end{align*}
where $(a)$ follows from \eqref{ine: deterministic distortion converse ine 1}. The upper bounds of $H(M_1)$ and $H(M_1,M_2)$ are the same as the converse of more capable broadcast channels.
Finally, it follows that
\begin{align*}
    nR &\geq H(M)\\
    &\geq I(M;S^n,\widetilde{S}^n,M_2|M_1)\\
    &= I(M;\widetilde{S}^n|M_1) + I(M;S^n,M_2|\widetilde{S}^n,M_1)\\
    &\overset{(a)}{\geq} H(\widetilde{S}^n|M_1) - n\epsilon + I(M;S^n|\widetilde{S}^n,M_1,M_2)\\
    &\overset{(b)}{\geq} H(\widetilde{S}^n|M_1) + H(S^n|\widetilde{S}^n,M_1,M_2) - 2n\epsilon\\
    &=\sum_{i=1}^n H(\widetilde{S}_i|\widetilde{S}^{i-1},M_1) + H(S_i|\widetilde{S}^n,S^{i-1},M_1,M_2) - 2n\epsilon\\
    &\geq \sum_{i=1}^n H(\widetilde{S}_i|\widetilde{S}^{i-1},M_1) + H(S_i|\widetilde{S}^n,S^{i-1},M_1,M_2,T^{i-1}) - 2n\epsilon\\
    &\overset{(c)}{=}\sum_{i=1}^n H(\widetilde{S}_i|\widetilde{S}^{i-1},M_1) + H(S_i|\widetilde{S}^n,M_1,M_2,T^{i-1}) - 2n\epsilon
\end{align*}
where $(a)$ and $(b)$ follows from Fano's inquality, $(c)$ follows from the Markov chain $S_i-(\widetilde{S}^n,M_1,M_2,T^{i-1})-S^{i-1}$. Introducing a time-sharing random variable completes the proof.

\clearpage
\newpage
\bibliographystyle{ieeetr} 
\bibliography{ref}

\clearpage
\newpage
\onecolumn

\appendices

\section{One-Receiver case}\label{app: single user outer bound}
\begin{proposition}\label{prop: single receiver necessary condition}
    Given a joint distribution $P_{ST}$, for the lossy compression of the source $S$ with a helper who observes $T$ and communicates to the receiver through a DMC $P_{Y|X},$ a rate-distortion-bandwidth triple $(R,D,\rho)$ is achievable only if
    \begin{align}
        &R \geq R_S(D) - I(V;S),\\
        &\rho C \geq  I(T;V)
    \end{align}
    for some $S-T-V$, where $R_S(D)$ is the rate-distortion function such that $R_S(D) = \min_{P_{\hat{S}|S}}I(S;\hat{S}), \mathbb{E}[d(S,\hat{S})]\leq D$, $C=\max_{P_X} I(X;Y)$ is the capacity of the channel $P_{Y|X}.$
\end{proposition}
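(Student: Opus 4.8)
\emph{Proof proposal.}
The plan is a standard distortion-constrained converse: no Fano inequality is needed on the source side, and the only real design choice is the single-letter auxiliary. I will take the per-letter auxiliary to be $V_i := (Y^l, S^{i-1}, T^{i-1})$ and then set $V := (V_Q,Q)$ for a time-sharing index $Q$ uniform on $[1:n]$ and independent of the sources and the channel. Throughout I will lean on the structure forced by the i.i.d.\ source $(S^n,T^n)$ and the independence of the channel noise $N^l$: first, $(S_i,T_i)$ is independent of $(S^{i-1},T^{i-1},S^n_{i+1},T^n_{i+1},N^l)$, so, since $Y^l$ is a function of $(T^n,N^l)$, the variable $S_i$ is independent of $V_i$ given $T_i$; second, $S^{i-1}-T^{i-1}-(T_i,Y^l)$ is a Markov chain.

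\emph{Rate bound.} I would start from $nR \ge H(M) \ge H(M\mid Y^l) \ge I(M;S^n\mid Y^l)$, and, since $\hat S^n=g_1(M,Y^l)$ is a function of $(M,Y^l)$, continue with $I(M;S^n\mid Y^l)\ge I(\hat S^n;S^n\mid Y^l)\ge I(\hat S^n;S^n)-I(Y^l;S^n)$. The term $I(\hat S^n;S^n)$ is handled in the usual way: $I(\hat S^n;S^n)=\sum_i I(\hat S^n,S^{i-1};S_i)\ge\sum_i I(\hat S_i;S_i)\ge\sum_i R_S(\mathbb{E}[d(S_i,\hat S_i)])\ge nR_S(D+\epsilon)$, using convexity and monotonicity of $R_S$ with $\tfrac1n\sum_i\mathbb{E}[d(S_i,\hat S_i)]\le D+\epsilon$. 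For the subtracted term, $I(Y^l;S^n)=\sum_i I(Y^l;S_i\mid S^{i-1})=\sum_i I(Y^l,S^{i-1};S_i)\le\sum_i I(V_i;S_i)$, where the middle equality uses $S_i\perp S^{i-1}$. Dividing by $n$ yields $R \ge R_S(D+\epsilon) - \tfrac1n\sum_i I(V_i;S_i)$.

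\emph{Channel bound.} Because $X^l$ is a function of $T^n$ and $Y^l$ is the output of the memoryless channel, the usual channel converse gives $lC \ge I(X^l;Y^l)\ge I(T^n;Y^l)=\sum_i I(T_i;Y^l\mid T^{i-1})$. I then rewrite the per-letter term: using $T_i\perp(S^{i-1},T^{i-1})$, we get $I(V_i;T_i)=I(Y^l,S^{i-1},T^{i-1};T_i)=I(Y^l,S^{i-1};T_i\mid T^{i-1})=I(Y^l;T_i\mid T^{i-1},S^{i-1})$, and the Markov chain $S^{i-1}-T^{i-1}-(T_i,Y^l)$ makes this equal to $I(Y^l;T_i\mid T^{i-1})$. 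Hence $lC\ge\sum_i I(V_i;T_i)$, i.e.\ $(\rho+\epsilon)C\ge\tfrac1n\sum_i I(V_i;T_i)$ using $l/n\le\rho+\epsilon$. Finally, introducing $Q$ and writing $S=S_Q$, $T=T_Q$, $V=(V_Q,Q)$: since $S_Q\perp Q\mid T_Q$ and $S_i\perp V_i\mid T_i$ per letter, one gets the Markov chain $S-T-V$, and $\tfrac1n\sum_i I(V_i;S_i)=I(V;S)$, $\tfrac1n\sum_i I(V_i;T_i)=I(V;T)$. Combining the two displays and letting $\epsilon\downarrow0$ (invoking a standard support-cardinality bound on $\mathcal V$ and compactness so a limiting $V$ exists, and continuity of $R_S$) gives $R\ge R_S(D)-I(V;S)$ and $\rho C\ge I(V;T)$.

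\emph{Main obstacle.} The delicate point is using a single auxiliary for both inequalities. The rate bound "wants" $V_i\supseteq(Y^l,S^{i-1})$, while the channel bound "wants" $V_i\supseteq(Y^l,T^{i-1})$; taking $V_i=(Y^l,S^{i-1},T^{i-1})$ reconciles them, but one must then check two things: (i) $S_i-T_i-V_i$ still holds, which is exactly where the i.i.d.\ source structure and the independence of the noise enter; and (ii) padding with $T^{i-1}$ does not make $\sum_i I(V_i;T_i)$ exceed $I(T^n;Y^l)$, which is the Markov-chain identity $I(Y^l;T_i\mid T^{i-1},S^{i-1})=I(Y^l;T_i\mid T^{i-1})$ above. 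Everything else is routine bookkeeping.
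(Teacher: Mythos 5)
Your proposal is correct and follows essentially the same route as the paper's proof: the same starting point $nR\geq I(M;S^n\mid Y^l)$ split into $I(\hat S^n;S^n)-I(Y^l;S^n)$, the same auxiliary $V_i=(Y^l,S^{i-1},T^{i-1})$ with time-sharing, and the same Markov-chain identity $I(Y^l;T_i\mid T^{i-1})=I(Y^l;T_i\mid T^{i-1},S^{i-1})$ to match the channel bound to $\sum_i I(V_i;T_i)$. Your treatment is in fact slightly more explicit than the paper's about verifying $S_i-T_i-V_i$ and handling the $\epsilon$-limits.
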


\begin{remark}
    The outer bound is tight in the following cases:
    \begin{itemize}
        \item When $S$ is required to be reconstructed losslessly: In this case, the separation-based scheme at the helper is optimal. The outer bound reduces to
        \begin{align}
            &R \geq H(S|V),\;\rho C \geq I(T;V).
        \end{align}
        The helper can always choose $V$ as a quantized version of $T$ such that $I(T;V)$ is slightly smaller than $\rho C$.
        \item When $T$ is a function of $S$, we have $R+\rho C \geq R_S(D) + I(T;V|S)=R_S(D).$
        \item Quadratic Gaussian source with additive Gaussian channel: Suppose $(S,T)$ is a pair of correlated Gaussian random variables with zero mean, unit variance, and correlation $\nu$. The channel is an additive Gaussian channel with noise variance $\sigma_N^2$ and input power constraint $P$. In this case, we have
        \begin{align}
            &nR \geq I(S^n;\hat{S}^n) - I(S^n;Y^l)\\
            &\geq H(S^n) - H(S^n|\hat{S}^n) - H(S^n) + H(S^n|Y^l)\\
            &\geq  \frac{n}{2}\log  \frac{1}{2\pi e D} + H(S^n|Y^l).
        \end{align}
        It follows that $S = \nu T + W$, where $W\sim \mathcal{N}(0,1-\nu^2)$ and
        \begin{align}
            2^{2H(S^n|Y^l)/n} &\geq 2^{2H(\nu T^n|Y^l)/n} + 2^{2H(W^n)/n}\\
            &=\nu^2 2^{2H(T^n|Y^l)/n} + 2\pi e(1-\nu^2).
        \end{align}
        Furthermore, we have
        \begin{align}
            l C \geq I(T^n;Y^l) = \frac{n}{2}\log 2\pi e - H(T^n|Y^l),
        \end{align}
        which gives
        \begin{align}
            \frac{1}{n}H(S^n|Y^l) \geq \frac{1}{2}\log 2\pi e (1-\nu^2 + \nu^2 2^{-2\rho C})
        \end{align}
        and
        \begin{align}
            R \geq \frac{1}{2}\log \frac{1-\nu^2 + \nu^2 2^{-2\rho C}}{D}.
        \end{align}
        For the achievability, it is straightforward to see an achievability region for the problem is
        \begin{align}
            &R \geq I(U;S|V),\\
            &\rho C > I(V;T)
        \end{align}
        with the Markov chain $U-S-T-V$ such that $\mathbb{E}[d(S,h(U,V))]\leq D$ for some function $h:\mathcal{U}\times\mathcal{V}\to \hat{S}$. One can choose test channels $U=S+N_1,V=T+N_2$ where $N_1\sim \mathcal{N}(0,\sigma_{N_{1}}^2),N_2\sim \mathcal{N}(0,\sigma_{N_{2}}^2)$ such that 
        \begin{align}
            I(V;T)=\frac{1}{2}\log \frac{1+\sigma_{N_2}^2}{\sigma_{N_2}^2} < \rho C.
        \end{align}
        Following the argument in \cite{wagner2008rate}\cite[Section 12.3]{el2011network}, the rate is
        \begin{align}
            R \geq \frac{1}{2}\log \frac{1-\nu^2 + \nu^2 2^{-2 I(V;T)}}{D}\geq \frac{1}{2}\log \frac{1-\nu^2 + \nu^2 2^{-2 \rho C}}{D}.
        \end{align}
    \end{itemize}
\end{remark}
The value of the rate-distortion $R_S(D)$ is the rate needed to achieve the distortion constraint, and $I(V;S)$ is the reduced rate at the source sender due to the message from the helper. However, for the helper, there is an additional rate penalty term $I(T;V|S)$ by its noisy observation $T$.
\begin{proof}[Proof of Proposition \ref{prop: single receiver necessary condition}]
    For the sender rate:
    \begin{align}
        nR &\geq H(M)\\
        &\geq I(M;S^n|Y^l)\\
        &=I(M,Y^l;S^n) - I(Y^l;S^n)\\
        &=I(M,Y^l,\hat{S}^n;S^n) - I(Y^l;S^n)\\
        &=\sum_{i=1}^n I(M,Y^l,\hat{S}^n;S_i|S^{i-1}) - I(Y^l;S_i|S^{i-1})\\
        &=\sum_{i=1}^n I(M,Y^l,\hat{S}^n,S^{i-1};S_i) - I(Y^l,S^{i-1};S_i)\\
        &\geq \sum_{i=1}^n I(S_i;\hat{S}_i) - I(Y^l,S^{i-1},T^{i-1};S_i)\\
        &\geq nR_S(D) - nI(V;S),
    \end{align}
    where the last equality follows by introducing a time-sharing random variable $Q$ and setting $V_Q = (Y^l,S^{Q-1},T^{Q-1},Q)$ and $V=V_Q,S=S_Q$. For the sum rate,
    \begin{align}
        &I(T^n;Y^l)\\
        &=\sum_{i=1}^n I(T_i;Y^l|T^{i-1})\\
        &=\sum_{i=1}^n H(T_i|T^{i-1}) - H(T_i|T^{i-1},Y^l)\\
        &\overset{(a)}{=} H(T_i|T^{i-1},S^{i-1}) - H(T_i|T^{i-1},Y^l,S^{i-1})\\
        &=\sum_{i=1}^n I(T_i;Y^l|T^{i-1},S^{i-1})\\
        &=\sum_{i=1}^n I(T_i;Y^l,T^{i-1},S^{i-1})\\
        &=n I(T;V),
    \end{align}
    where $(a)$ follows by the i.i.d. property of $(S_i,T_i),i=1,,,n$ and the Markov chain $S^{i-1}-(T^{i-1},Y^l)-T_i$. The upper bound follows from
    \begin{align}
        I(T^n;Y^l)\leq \sum_{i=1}^l I(T^n,Y^{i-1};Y_i) \leq l C.
    \end{align}
    The proof is completed.
\end{proof}

\section{achievability of corollary \ref{coro: T=S separation based scheme optimal case}}\label{app: achievability of T=S separation case}
 Fix a joint distribution $P_{\hat{S}_1\hat{S}_2|S}$ and $\epsilon>0$ such that
\begin{align}
    &\mathbb{E}[d(S,\hat{S}_1)]\leq D_1,\mathbb{E}[d(S,\hat{S}_2)]\leq D_2
\end{align}
Generate a codebook $\{\hat{S}^n_1\}$ of size $2^{n(R_{11}+R_{21})}$ such that each codeword is i.i.d. generated according to the distribution $P_{\hat{S}_1|S}$.
The indices of all $\hat{S}^n_1$ are split into two independent sets $[1:2^{nR_{11}}]$ and $[1:2^{nR_{21}}]$.

For each $\hat{s}^n_1,$ generate a codebook $\{\hat{s}^n_2|\hat{s}^n_1\}$ of size $2^{n (R_{12}+R_{22})}$ such that each codeword is i.i.d. generated according to the distribution $P_{\hat{S}_2|\hat{S}_1}$.  Again, the indices of $\hat{S}^n_2$ in each codebook are split into two independent sets $[1:2^{nR_{12}}]$ and $[1:2^{nR_{22}}]$.

Upon observing the source $s^n$, the sender and the helper look for a $u^n_1$ such that $(\hat{s}^n_1,s^n)\in\mathcal{T}^n_{P_{S\hat{S}},\delta}$. If there is more than one such codeword, select the first one. If there is no such codeword, the sender declares an error. The selected $\hat{s}^n_1$ is indexed by $(M_{11},M_{21})$. Then the sender and the helper look for a codeword $\hat{s}^n_2$ in $\{\hat{s}^n_2|\hat{s}^n_1\}$ such that $(\hat{s}^n_1,\hat{s}^n_2,s^n)\in \mathcal{T}^n_{P_{\hat{S}_1\hat{S}_2S},\delta}$. If there is more than one such codeword, select the first one. If there is no such codeword, the sender declares an error. The selected codeword can be indexed by $(M_{12},M_{22})$. Then the sender sends $(M_{11},M_{12})$ through the noiseless link and the helper uses a degraded BC code to send $(M_{21},M_{22})$ with $M_{21}$ being the common message and $M_{22}$ being the private message. Coding error probability vanishes with sufficiently large $n$ if the following inequalities and equalities hold:
\begin{align}
    &R_{11} + R_{21} \geq I(\hat{S}_1;S),\;\;R_{12} + R_{22} \geq I(\hat{S}_2;S|\hat{S}_1),\\
    &R_{21} \leq I(W;Z),\;\;R_{21} + R_{22} \leq I(W;Z) + I(X;Y|W),\\
    &R=R_{11} + R_{12}, R_{11}\geq 0, R_{12}\geq 0, R_{21}\geq 0, R_{22}\geq 0.
\end{align}
Applying the Fourier-Motzkin elimination yields
\begin{align}
    &R + I(W;Z) \geq I(\hat{S}_1;S),\\
    &R+ I(W;Z) + I(X;Y|W) \geq I(\hat{S}_1,\hat{S}_2;S).
\end{align}

\end{document}